\newtheorem{theorem}{Theorem}[section]
\newtheorem{proposition}[theorem]{Proposition}
\newtheorem{lemma}[theorem]{Lemma}
\theoremstyle{remark}
\newtheorem{remark}{Remark}[section]
\theoremstyle{definition}
\newtheorem{definition}{Definition}[section]
\begin{document}

\title{A SICA compartmental model in epidemiology\\ 
with application to HIV/AIDS in Cape Verde\thanks{This is a preprint 
of a paper whose final and definite form is with 
\emph{Ecological Complexity}, ISSN 1476-945X, available at 
[http://dx.doi.org/10.1016/j.ecocom.2016.12.001]. 
Submitted 29/April/2016; Revised 21/Oct/2016; Accepted 02/Dec/2016.}}

\author{Cristiana J. Silva\thanks{Corresponding author.}\\ 
\texttt{cjoaosilva@ua.pt}
\and 
Delfim F. M. Torres\\
\texttt{delfim@ua.pt}}

\date{Center for Research and Development in Mathematics and Applications (CIDMA)\\
Department of Mathematics, University of Aveiro, 3810-193 Aveiro, Portugal}


\maketitle

\begin{abstract}
We propose a new mathematical model for the transmission 
dynamics of the human immunodeficiency virus (HIV). 
Global stability of the unique endemic equilibrium
is proved. Then, based on data provided by the 
``Progress Report on the AIDS response in Cape Verde 2015'', 
we calibrate our model to the cumulative cases of infection 
by HIV and AIDS from 1987 to 2014 and we show that our 
model predicts well such reality. Finally, a sensitivity 
analysis is done for the case study in Cape Verde. 
We conclude that the goal of the United Nations 
to end the AIDS epidemic by 2030 is a nontrivial task.
\end{abstract}


\paragraph{Keywords:} HIV/AIDS epidemiology, antiretroviral therapy (ART), 
SICA compartmental model, global stability, United Nations UNAIDS strategy. 

\paragraph{MSC 2010:} 34C60, 34D23, 92D30.


\section{Introduction}

The infection by human immunodeficiency virus (HIV) continues 
to be a major global public health issue, having claimed more 
than 34 million lives so far \cite{HIV:AIDS:FactSheet2015}. 
The most advanced stage of HIV infection is acquired 
immunodeficiency syndrome (AIDS). There is yet no cure or vaccine to AIDS.
However, antiretroviral (ART) treatment improves health, prolongs life, 
and substantially reduces the risk of HIV transmission. 
In both high-income and low-income countries, the life expectancy 
of patients infected with HIV who have access to ART is now measured
in decades, and might approach that of uninfected populations in patients 
who receive an optimum treatment (see \cite{AIDS:chronic:Lancet:2013} 
and references cited therein). However, ART treatment still presents 
substantial limitations: does not fully restore health;
treatment is associated with side effects; the medications are expensive; 
and is not curative. According to the Joint United Nations Programme 
on HIV and AIDS (UNAIDS), the main advocate for accelerated, comprehensive 
and coordinated global action on the HIV/AIDS epidemic, the number 
of people living with HIV continues to increase, in large part because 
more people are accessing ART therapy globally. In 2015, 15.8 million 
people were accessing treatment.  New HIV infections have fallen by 
35\% since 2000 and AIDS-related deaths have fallen by 42\% since 
the peak in 2004. In 2000, fewer than 1\% of people living with HIV 
in low- and middle-income countries had access to treatment. In 2014, 
the global coverage of people receiving ART therapy was 40\%. 
However, in 2014, around 2 million people were newly infected with HIV 
and 1.2 million people died of AIDS-related illnesses. 
Of the 36.9 million people living with HIV globally in 2014, 
17.1 million did not know they had the virus and were not aware 
of the need to be reached with HIV testing services. 
Around 22 million do not had access to HIV treatment, 
including 1.8 million children \cite{AIDSnumbers2015}.

Cape Verde is an island country spanning an archipelago 
of 10 volcanic islands in the central Atlantic Ocean. 
Located 570 kilometres (350 miles) off the coast of Western Africa, 
the islands cover a combined area of slightly over 4,000 
square kilometres (1,500 square miles).
Since the early 1990s, Cape Verde has been a stable 
representative democracy, and remains one of the most 
developed and democratic countries in Africa. 
Its population is of around 506,000 residents with
a sizeable diaspora community across all the world, 
outnumbering the inhabitants on the islands.
Such facts make Cape Verde an important
case study, with global implications \cite{MR3175429}.

In 2014, 409 new HIV cases were reported in Cape Verde, 
accumulating a total of 4,946 cases. Of this total,
1,766 developed AIDS and 1,066 have died.
The municipality with more cases was Praia, 
followed by Santa Catarina (Santiago island) and S\~{a}o Vicente. 
Cape Verde has developed a Strategic National Plan to fight against AIDS, 
which includes ART treatment, monitoring of patients, 
prevention actions and HIV testing. From the first diagnosis 
of AIDS in 1986, Cape Verde got significant progress in the fight, 
prevention and treatment of HIV/AIDS \cite{report:HIV:AIDS:capevert2015}.

Mathematical models, based on the underlying transmission 
mechanism of the HIV, can help the medical and scientific 
community to understand and anticipate its spread 
in different populations and evaluate the potential effectiveness 
of different approaches for bringing the epidemic 
under control \cite{Hyman:Stanley:1988}. 
Several mathematical models have been proposed 
for HIV/AIDS transmission dynamics: see, e.g., 
\cite{Anderson:JAIDS:1988,Anderson:etall:PTRSL:1989,Bhunu:etall:ActaBiothe:2009,%
Bhunu:etall:JMMA:2011,Cai:etall:JCAM:2009,Granich:etall:TheLancet:2009,%
Greenhalgh:etall:IMAJMAMB:2001,Hyman:Stanley:1988,Joshi:etall:MBioEng:2008,%
MayAnderson:Nature:1987,Musgrave:etall:MBioENg:2009} 
and references cited therein. 

In this paper, based on \cite{SilvaTorres:TBHIV:2015},
we propose and analyse a mathematical model for the 
transmission dynamics of HIV and AIDS. Our aim is to show 
that a simpler mathematical model than the one of
\cite{SilvaTorres:TBHIV:2015} can help to clarify some 
of the essential relations between epidemiological factors 
and the overall pattern of the AIDS epidemic, 
as it was stated in \cite{MayAnderson:Nature:1987}. 
Our model considers ART treatment of HIV-infected individuals 
with and without AIDS symptoms. The HIV-infected individuals 
with no AIDS symptoms that start ART treatment move into 
the class of individuals that respect carefully ART treatment 
and stay in what we call a \emph{chronic} stage of the infection. 
We assume that individuals in the chronic class have the same expectancy 
of life as non HIV infected individuals. Our model is then calibrated 
to the cumulative cases of infection by HIV and AIDS in Cape Verde, 
from 1987 to 2014, as reported in \cite{report:HIV:AIDS:capevert2015}. 
We show that our model predicts well such reality. 
A sensitivity analysis is then carried out, for some of the model 
parameters, and some conclusions for Cape Verde and worldwide inferred. 

The paper is organized as follows. The model is formulated 
and analysed in Section~\ref{sec:SICA:model}. 
In Section~\ref{sec:global:stab}, we prove the global stability 
of the unique endemic equilibrium whenever the basic reproduction 
number is greater than one. In Section~\ref{sec:sensitivity}, 
we do a sensitivity analysis of the basic reproduction number. 
Finally, in Section~\ref{sec:model:CapeVerde} we apply our model 
to HIV/AIDS Cape Verde infection data from the period of 1987 to 2014 
and some conclusions are derived in Section~\ref{sec:conclusion}.


\section{The SICA model for HIV/AIDS transmission}
\label{sec:SICA:model}

In this section we propose and analyse a mathematical model 
for HIV/AIDS transmission with varying population size 
in a homogeneously mixing population. 
The model is based on that of \cite{SilvaTorres:TBHIV:2015} 
and subdivides the human population into four mutually-exclusive 
compartments: susceptible individuals ($S$); 
HIV-infected individuals with no clinical symptoms of AIDS 
(the virus is living or developing in the individuals 
but without producing symptoms or only mild ones) 
but able to transmit HIV to other individuals ($I$); 
HIV-infected individuals under ART treatment (the so called 
chronic stage) with a viral load remaining low ($C$); 
and HIV-infected individuals with AIDS clinical symptoms ($A$).
The total population at time $t$, denoted by $N(t)$, is given by
\begin{equation*}
N(t) = S(t) + I(t) + C(t) + A(t).
\end{equation*} 
The susceptible population is increased by the recruitment 
of individuals into the population, assumed susceptible,  
at a rate $\Lambda$. All individuals suffer from natural death, 
at a constant rate $\mu$. Although individuals in the chronic stage, 
with a low viral load and under ART treatment, can still transmit 
HIV infection, as ART greatly reduces the risk of transmission 
and individuals that take ART treatment correctly are aware 
of their health status, we assume that individuals in the class $C$ 
do not have risky behaviours for HIV transmission and do not 
transmit HIV virus. We also assume that individuals 
with AIDS clinical symptoms $A$ are responsible 
and do not have any behaviour that can transmit HIV 
infection or, in other cases, are too sick to have a risky behaviour.  
Based on these two assumptions, susceptible individuals acquire HIV infection
by following effective contact with individuals in the class $I$ 
at a rate $\lambda = \beta \frac{I}{N}$, where $\beta$ is the effective 
contact rate for HIV transmission. We assume that HIV-infected individuals 
with and without AIDS symptoms have access to ART treatment. 
HIV-infected individuals with no AIDS symptoms $I$ progress to the class 
of individuals with HIV infection under ART treatment $C$ at a rate $\phi$, 
and HIV-infected individuals with AIDS symptoms are treated for HIV at rate $\alpha$.
We assume that an HIV-infected individual with AIDS symptoms $A$ 
that starts treatment moves to the class of HIV-infected individuals $I$ 
and he will move to the chronic class $C$ only if the treatment is maintained. 
HIV-infected individuals with no AIDS symptoms $I$ that do not take 
ART treatment progress to the AIDS class $A$ at rate $\rho$. We assume 
that only HIV-infected individuals with AIDS symptoms $A$ 
suffer from an AIDS induced death, at a rate $d$. These assumptions 
are translated in the following mathematical model:
\begin{equation}
\label{eq:model}
\begin{cases}
\dot{S}(t) = \Lambda - \beta \frac{I(t)}{N(t)} S(t) - \mu S(t),\\[0.2 cm]
\dot{I}(t) = \beta \frac{I(t)}{N(t)} S(t) - (\rho + \phi + \mu)I(t)
+ \alpha A(t)  + \omega C(t), \\[0.2 cm]
\dot{C}(t) = \phi I(t) - (\omega + \mu)C(t),\\[0.2 cm]
\dot{A}(t) =  \rho \, I(t) - (\alpha + \mu + d) A(t).
\end{cases}
\end{equation}
The epidemiological scheme of \eqref{eq:model} is given in Figure~\ref{epid:scheme}.
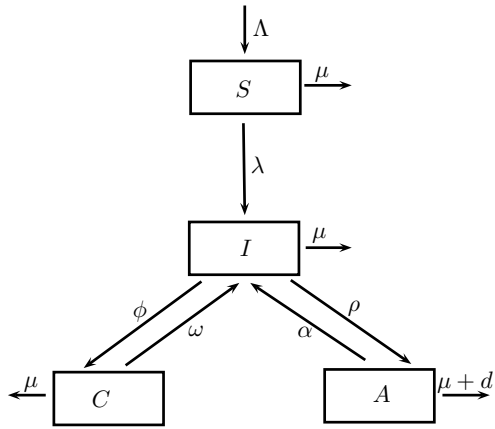
\begin{figure}[!htb]
\centering
\scalebox{0.9}{
\begin{pspicture}(0,-3.11)(7.5228124,3.13)
\psframe[linewidth=0.04,dimen=outer](4.4209375,2.31)(2.7609375,1.49)
\psframe[linewidth=0.04,dimen=outer](4.4009376,-0.07)(2.7409375,-0.89)
\psframe[linewidth=0.04,dimen=outer](2.4009376,-2.29)(0.7409375,-3.11)
\psframe[linewidth=0.04,dimen=outer](6.4009376,-2.25)(4.7409377,-3.07)
\psline[linewidth=0.04cm,arrowsize=0.05291667cm 2.0,arrowlength=1.4,arrowinset=0.4]{->}(3.5609374,1.37)(3.5809374,0.01)
\psline[linewidth=0.04cm,arrowsize=0.05291667cm 2.0,arrowlength=1.4,arrowinset=0.4]{->}(2.9409375,-0.97)(1.2009375,-2.25)
\psline[linewidth=0.04cm,arrowsize=0.05291667cm 2.0,arrowlength=1.4,arrowinset=0.4]{->}(4.2609377,-0.95)(6.0409374,-2.19)
\psline[linewidth=0.04cm,arrowsize=0.05291667cm 2.0,arrowlength=1.4,arrowinset=0.4]{->}(1.8209375,-2.21)(3.5009375,-0.99)
\psline[linewidth=0.04cm,arrowsize=0.05291667cm 2.0,arrowlength=1.4,arrowinset=0.4]{->}(5.3609376,-2.13)(3.6609375,-0.99)
\psline[linewidth=0.04cm,arrowsize=0.05291667cm 2.0,arrowlength=1.4,arrowinset=0.4]{->}(3.5809374,3.11)(3.5809374,2.39)
\psline[linewidth=0.04cm,arrowsize=0.05291667cm 2.0,arrowlength=1.4,arrowinset=0.4]{->}(4.4609375,1.93)(5.1609373,1.93)
\psline[linewidth=0.04cm,arrowsize=0.05291667cm 2.0,arrowlength=1.4,arrowinset=0.4]{->}(6.5009375,-2.65)(7.2009373,-2.65)
\psline[linewidth=0.04cm,arrowsize=0.05291667cm 2.0,arrowlength=1.4,arrowinset=0.4]{->}(0.6409375,-2.65)(0.0809375,-2.65)
\psline[linewidth=0.04cm,arrowsize=0.05291667cm 2.0,arrowlength=1.4,arrowinset=0.4]{->}(4.4809375,-0.47)(5.1609373,-0.47)
\usefont{T1}{ptm}{m}{n}
\rput(3.5623438,1.88){$S$}
\usefont{T1}{ptm}{m}{n}
\rput(3.6023438,-0.48){$I$}
\usefont{T1}{ptm}{m}{n}
\rput(1.4623437,-2.7){$C$}
\usefont{T1}{ptm}{m}{n}
\rput(5.6023436,-2.64){$A$}
\usefont{T1}{ptm}{m}{n}
\rput(3.8223438,2.78){$\Lambda$}
\usefont{T1}{ptm}{m}{n}
\rput(4.722344,2.1){$\mu$}
\usefont{T1}{ptm}{m}{n}
\rput(4.682344,-0.28){$\mu$}
\usefont{T1}{ptm}{m}{n}
\rput(0.42234376,-2.48){$\mu$}
\usefont{T1}{ptm}{m}{n}
\rput(6.8323436,-2.46){$\mu+d$}
\usefont{T1}{ptm}{m}{n}
\rput(3.7723436,0.76){$\lambda$}
\usefont{T1}{ptm}{m}{n}
\rput(2.0323439,-1.4){$\phi$}
\usefont{T1}{ptm}{m}{n}
\rput(2.8623438,-1.7){$\omega$}
\usefont{T1}{ptm}{m}{n}
\rput(4.472344,-1.72){$\alpha$}
\usefont{T1}{ptm}{m}{n}
\rput(5.1923437,-1.36){$\rho$}
\end{pspicture}}
\caption{Epidemiological scheme of the SICA mathematical model \eqref{eq:model}.}
\label{epid:scheme}
\end{figure}

Consider the biologically feasible region
\begin{equation*}
\Omega = \left\{ (S, I, C, A) \in {\mathbb{R}}^4_{+0} \, | \,  0
\leq S + I + C + A \leq \frac{\Lambda}{\mu} \right\}.
\end{equation*}
Using a standard comparison theorem (see \cite{Lakshmikantham:1989}) 
one can easily show that $N(t) \leq \frac{\Lambda}{\mu}$ 
if $N(0) \leq \frac{\Lambda}{\mu}$. Thus, the region $\Omega$ 
is positively invariant. Hence, it is sufficient to consider
the dynamics of the flow generated by \eqref{eq:model} in $\Omega$.
In this region, the model is epidemiologically and mathematically 
well posed in the sense of \cite{Hethcote:2000}. In other words,
every solution of the model \eqref{eq:model} with initial conditions
in $\Omega$ remains in $\Omega$ for all $t > 0$. For the persistence 
of the host population and disease we refer the reader to 
\cite[Section 3.1.1]{SilvaTorres:TBHIV:2015}. 
Precisely, the disease is persistent in the population 
if the fraction of the infected and AIDS cases is bounded away from zero. 
If the population dies out and the fraction of the infected and AIDS 
remains bounded away from zero, the disease still remains persistent 
in the population. The model \eqref{eq:model} 
has a disease-free equilibrium given by
\begin{equation*}
\Sigma_0 = \left(S^0, I^0, C^0, A^0 \right)
= \left(\frac{\Lambda}{\mu},0, 0,0  \right).
\end{equation*}
The basic reproduction number $R_0$, which represents 
the expected average number of new HIV infections
produced by a single HIV-infected individual when in contact 
with a completely susceptible population, is easily computed 
by the van den Driessche and Watmough approach \cite{van:den:Driessche:2002}:
\begin{equation}
\label{eq:R0}
R_0 = \frac{\beta \xi_1 \xi_2 }{\mu \left[ \xi_2 (\rho + \xi_1)
+ \xi_1 \phi + \rho d \right] + \rho \omega d} 
= \frac{\mathcal{D}}{\mathcal{N}},
\end{equation}
where 
$$
\xi_1 = \alpha + \mu + d \ \text{ and } \ 
\xi_2 = \omega + \mu.
$$ 

\begin{lemma}[See \cite{SilvaTorres:TBHIV:2015}]
The disease free equilibrium $\Sigma_0$ 
is globally asymptotically stable if $R_0 < 1$.
If $R_0 > 1$, then the disease free equilibrium $\Sigma_0$ is unstable.
\end{lemma}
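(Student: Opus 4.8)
The plan is to establish the two assertions separately, using the standard machinery associated with the next-generation matrix approach of van den Driessche and Watmough, together with an explicit Lyapunov argument for the global part.

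First I would address the local instability claim when $R_0>1$. By the general theory in \cite{van:den:Driessche:2002}, once $R_0$ is computed as the spectral radius of the next-generation matrix $FV^{-1}$ (with $F$ and $V$ the usual decomposition of the infected subsystem $(I,C,A)$ at $\Sigma_0$), the disease-free equilibrium is locally asymptotically stable if $R_0<1$ and unstable if $R_0>1$. So for the instability half of the statement it suffices to verify that the model \eqref{eq:model} satisfies the hypotheses (A1)--(A5) of that reference: nonnegativity of $\mathcal{F}$ and of the off-diagonal part of $-\mathcal{V}$, the fact that the disease-free set is invariant, and that the linearization of the disease-free subsystem is a stable matrix. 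These are all routine to check from the structure of \eqref{eq:model}, and then Theorem~2 of \cite{van:den:Driessche:2002} gives instability for $R_0>1$ immediately.

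For the global asymptotic stability when $R_0<1$, the plan is to construct a Lyapunov function on $\Omega$ of the form $L = a_1 I + a_2 C + a_3 A$ with positive constants $a_1,a_2,a_3$ to be chosen (equivalently, take the left Perron eigenvector of $V^{-1}F$ as the coefficient vector). Differentiating along solutions of \eqref{eq:model} and using $S\le N$ (valid on $\Omega$, since $S+I+C+A=N$), one bounds $\beta \frac{I}{N}S \le \beta I$, so that $\dot L$ is dominated by a linear expression $\langle a,\,(F-V)x\rangle$ in $x=(I,C,A)^{T}$. The choice of $a$ as the appropriate eigenvector makes this equal to $(R_0-1)$ times a nonnegative quantity (up to the positive factor coming from the eigenvalue), hence $\dot L\le 0$ when $R_0\le 1$, with $\dot L=0$ forcing $I=C=A=0$. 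A LaSalle invariance principle argument then shows that the largest invariant set in $\{\dot L=0\}$ is the singleton $\{\Sigma_0\}$ --- indeed on $I=C=A=0$ the first equation of \eqref{eq:model} forces $S\to\Lambda/\mu$ --- which combined with the local stability from the next-generation theorem yields global asymptotic stability of $\Sigma_0$ in $\Omega$ when $R_0<1$.

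The main obstacle is choosing the Lyapunov coefficients $a_1,a_2,a_3$ correctly and checking that the resulting $\dot L$ really does collapse to a single sign-definite term; this is where the algebra of $R_0$ in \eqref{eq:R0}, with its denominator $\mathcal{N}=\mu[\xi_2(\rho+\xi_1)+\xi_1\phi+\rho d]+\rho\omega d$, enters, and some care is needed because the substitution $S\le N$ introduces an inequality rather than an identity, so one must ensure monotonicity of the bound is in the favourable direction. Since the statement is attributed to \cite{SilvaTorres:TBHIV:2015}, I would also simply cite that reference for the detailed verification, noting that the model \eqref{eq:model} here is a special case (obtained by suitable parameter specialization) of the TB-HIV model treated there, so the global stability of $\Sigma_0$ transfers directly.
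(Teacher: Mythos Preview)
The paper does not actually prove this lemma: it is stated with the attribution ``See \cite{SilvaTorres:TBHIV:2015}'' and no argument is given in the text. Your final paragraph --- simply citing that reference and observing that \eqref{eq:model} is a specialization of the coinfection model treated there --- is precisely what the paper does, and nothing more.

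Your detailed plan (van den Driessche--Watmough for local stability/instability, a linear Lyapunov function $L=a_1I+a_2C+a_3A$ built from the left eigenvector of $V^{-1}F$, the bound $\beta IS/N\le\beta I$ on $\Omega$, and LaSalle's principle) is the standard and correct route for this class of models and would produce a self-contained proof that the paper itself omits. The only caveat is bookkeeping: the substitution $S/N\le 1$ is harmless here because the $S$-dependence enters only through the single new-infection term, so the inequality goes the right way; and the explicit coefficients can be read off as $(a_1,a_2,a_3)=(\xi_1\xi_2,\,\omega\xi_1,\,\alpha\xi_2)$ (or any positive scalar multiple), after which $\dot L\le \xi_1\xi_2\beta I-\mathcal{N}I=\mathcal{N}(R_0-1)I\le 0$ as you anticipate. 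So your proposal is correct and in fact supplies substantially more than the paper does.
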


To find conditions for the existence of an equilibrium 
$\Sigma^* = \left(S^*, I^*, C^*, A^* \right)$ for which HIV is endemic
in the population (i.e., at least one of $I^*$, $C^*$ or $A^*$ is nonzero),
the equations in \eqref{eq:model} are solved in terms of the force of infection
at steady-state ($\lambda^*$), given by
\begin{equation}
\label{eq:lambdaH:ast}
\lambda^* = \frac{\beta I^*}{N^*}.
\end{equation}
Setting the right-hand sides of the model \eqref{eq:model}
to zero (and noting that $\lambda = \lambda^*$ at equilibrium) gives
\begin{equation}
\label{eq:end:equil:HIV}
S^* = \frac{\Lambda}{\lambda^* + \mu}, \quad 
I^*=\frac{\lambda^* \Lambda \xi_1 \xi_2}{D}, \quad C^*
=\frac{\phi \lambda^* \Lambda \xi_1}{D}, \quad A^*
=\frac{\rho \lambda^* \Lambda \xi_2}{D}
\end{equation}
with
$D = (\lambda^* + \mu)\left[\mu \left(\xi_2 (\rho_1 + \xi_1)
+ \xi_1 \phi + \rho d \right) + \rho \omega d\right]$.
Using \eqref{eq:end:equil:HIV} in the expression
for $\lambda^*$ in \eqref{eq:lambdaH:ast} shows that
the nonzero (endemic) equilibria of the model satisfies
\begin{equation*}
\lambda^* = \mu (R_0 -1).
\end{equation*}
The force of infection at the steady-state $\lambda^*$ is positive
only if $R_0 > 1$. Therefore, system \eqref{eq:model} has a unique 
endemic equilibrium whenever $R_0 > 1$. This is in agreement 
with \cite[Lemma 3.7]{SilvaTorres:TBHIV:2015}. 


\section{Global stability of the endemic equilibrium}
\label{sec:global:stab}

In this section, we prove the global stability of the endemic equilibrium $\Sigma^*$. 
We assume that each variable $S$, $I$, $C$ and $A$ 
represents the corresponding proportion of the total population 
and consider the following system:
\begin{equation}
\label{eq:model:2}
\begin{cases}
\dot{S}(t) = \Lambda - \beta I(t) S(t) - \mu S(t),\\[0.2 cm]
\dot{I}(t) = \beta I(t) S(t) - (\rho + \phi + \mu)I(t)
+ \alpha A(t)  + \omega C(t), \\[0.2 cm]
\dot{C}(t) = \phi I(t) - (\omega + \mu)C(t),\\[0.2 cm]
\dot{A}(t) =  \rho \, I(t) - (\alpha + \mu + d) A(t).
\end{cases}
\end{equation}
Let us start by defining the region 
\begin{equation*}
\Omega_0 = \left\{ (S, I, C, A) \in \Omega \, | \,  I = C = A = 0 \right\}.
\end{equation*}
Consider the following Lyapunov function:
\begin{multline*}
V = S - S^* - S^* \ln\left(\frac{S}{S^*} \right) + I - I^* - I^* \ln\left(\frac{I}{I^*} \right) 
+ \frac{\omega}{\xi_2} \left(C - C^* - C^* \ln\left(\frac{C}{C^*} \right)\right)\\
+ \frac{\alpha}{\xi_1} \left(A - A^* - A^* \ln\left(\frac{A}{A^*} \right)\right).
\end{multline*}
Differentiating $V$ with respect to time gives
\begin{equation*}
\dot{V} = \left(1-\frac{S^*}{S}\right)\dot{S} + \left(1-\frac{I^*}{I}\right)\dot{I} 
+ \frac{\omega}{\xi_2} \left(1-\frac{C^*}{C}\right)\dot{C} + \frac{\alpha}{\xi_1} 
\left(1-\frac{A^*}{A}\right)\dot{A}.
\end{equation*}
Let $\xi_3 = \rho + \phi + \mu$. Substituting the expressions 
for the derivatives in $\dot{V}$, it follows from 
\eqref{eq:model:2} that
\begin{multline}
\label{eq:difV:1}
\dot{V} 
= \left(1-\frac{S^*}{S}\right)\left[ \Lambda - \beta I S -\mu S \right] 
+ \left(1-\frac{I^*}{I}\right)\left[ \beta I S - \xi_3 I + \alpha A + \omega C \right]\\
+ \frac{\omega}{\xi_2}  \left(1-\frac{C^*}{C}\right)\left[  \phi I -  \xi_2 C \right] 
+ \frac{\alpha}{\xi_1} \left(1-\frac{A^*}{A}\right)\left[  \rho I - \xi_1 A \right].
\end{multline}
Using the relation $\Lambda = \beta I^* S^* + \mu S^*$, 
we have from the first equation of system \eqref{eq:model:2} 
at steady-state that \eqref{eq:difV:1} can be written as
\begin{multline*}
\dot{V} = \left(1-\frac{S^*}{S}\right)\left[  \beta I^* S^* + \mu S^* - \beta I S -\mu S \right]
+ \left(1-\frac{I^*}{I}\right)\left[ \beta I S - \xi_3 I + \alpha A + \omega C \right]\\
+ \frac{\omega}{\xi_2} \left(1-\frac{C^*}{C}\right)\left[  \phi I -  \xi_2 C \right]  
+ \frac{\alpha}{\xi_1} \left(1-\frac{A^*}{A}\right)\left[  \rho I - \xi_1 A \right],
\end{multline*}
which can then be simplified to
\begin{multline*}
\dot{V} = \beta I^* S^* \left(1-\frac{S^*}{S} \right)  
+ \mu S^* \left(2- \frac{S}{S^*} 
- \frac{S^*}{S} \right) + S^* \beta I
- \xi_3 I +  \frac{\omega}{\xi_2} \phi I 
+ \frac{\alpha}{\xi_1} \rho I \\
-\frac{I^*}{I} \left[ \beta I S 
- \xi_3 I + \alpha A + \omega C \right]
-\frac{\omega}{\xi_2}\frac{C^*}{C} \left( \phi I -  \xi_2 C \right) 
-\frac{\alpha}{\xi_1} \frac{A^*}{A} \left( \rho I - \xi_1 A \right).
\end{multline*}
Using the relations at the steady state
\begin{equation*}
I^* = \frac{\beta I^* S^*  + \alpha A^*  + \omega C^*}{\xi_3}, 
\quad C^* = \frac{\phi I^*}{\xi_2}, 
\quad A^* = \frac{\rho I^*}{\xi_1},
\end{equation*} 
we have
\begin{multline*}
\dot{V} = \beta I^* S^* \left(1-\frac{S^*}{S} \right)  
+ \mu S^* \left(2- \frac{S}{S^*} - \frac{S^*}{S} \right)
-\frac{I^*}{I} \left( \beta I S - \xi_3 I + \alpha A + \omega C \right)\\
-\frac{\omega}{\xi_2} \frac{C^*}{C} \left( \phi I -  \xi_2 C \right) 
- \frac{\alpha}{\xi_1} \frac{A^*}{A} \left( \rho I - \xi_1 A \right).
\end{multline*}
Simplifying further the previous equations, we finally have
\begin{equation*}
\begin{split}
\dot{V} &= \left( \beta I^* S^* + \mu S^* \right) \left(2-\frac{S^*}{S} 
- \frac{S}{S^*}  \right)\\
&\quad + \omega C^* \left(1 - \frac{C}{C^*} \frac{I^*}{I} \right) 
+ \alpha A^* \left( 1 - \frac{A}{A^*} \frac{I^*}{I}\right)  \\
& \quad + \frac{\omega \phi}{\xi_2} I^* \left(1 -\frac{C^*}{C} \frac{I}{I^*}  \right) 
+ \frac{\alpha \rho}{\xi_1} I^* \left(1 - \frac{A^*}{A}  \frac{I}{I^*} \right),
\end{split}
\end{equation*}
which is less than or equal to zero by the arithmetic 
mean-geometric mean inequality. Therefore, we have 
$\dot{V} \leq 0$ with equality holding only 
if $S=S^*$ and $\frac{I}{I^*} = \frac{C}{C^*}=\frac{A}{A^*}$. 
By LaSalle's invariance principle \cite{LaSalle}, 
the omega limit set of each solution lies in an invariant set
contained in 
$$
\mathcal{P} = \left\{ (S, I, C, A) \, | \, S=S^*, 
\frac{I}{I^*} = \frac{C}{C^*}=\frac{A}{A^*} \right\}.
$$ 
Since $S$ must remain constant at $S^*$, $\dot{S}$ is zero. 
This implies that $I=I^*$, making $\frac{I}{I^*}$ equal to one. 
Thus, $C=C^*$ and $A=A^*$. Hence, the only invariant set contained 
in $\mathcal{P}$ is the singleton $\{ \Sigma^*\}$. This shows 
that the endemic equilibrium $\Sigma^*$ is the only 
solution that intersects $\Omega$. We have just proved 
that if $R_0 > 1$, then the unique endemic equilibrium 
$\Sigma^*$ is globally asymptotically stable 
on $\Omega \backslash \Omega_0$. This result
is summarized  in the following theorem. 

\begin{theorem}
\label{theorem:global:stab:EE}
The unique endemic equilibrium $\Sigma^*$ of model \eqref{eq:model:2} 
is globally asymptotically stable in $\Omega \backslash \Omega_0$ 
whenever $R_0 > 1$. 
\end{theorem}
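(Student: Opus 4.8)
The plan is to exhibit a Goh--Volterra type (composite logarithmic) Lyapunov function for \eqref{eq:model:2} and then conclude by LaSalle's invariance principle. Since $R_0 > 1$, the analysis of Section~\ref{sec:SICA:model} guarantees that the endemic equilibrium $\Sigma^* = (S^*, I^*, C^*, A^*)$ exists, is unique, and has all components strictly positive; in particular it lies in the interior of $\Omega$. I would work throughout on $\Omega \backslash \Omega_0$, where $I$, $C$ and $A$ stay positive, so that the logarithmic terms below are well defined.

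First I would set
\[
V = g(S) + g(I) + \frac{\omega}{\xi_2}\, g(C) + \frac{\alpha}{\xi_1}\, g(A),
\qquad g(x) := x - x^* - x^* \ln\frac{x}{x^*},
\]
where $g$ is understood with the appropriate starred value in each slot. Each $g$ is nonnegative and vanishes precisely at its equilibrium value, so $V$ is positive definite with respect to $\Sigma^*$. The two weights $\omega/\xi_2$ and $\alpha/\xi_1$ are not arbitrary: they are the unique choices that make the contributions of the $C$ and $A$ equations cancel the incoming $\omega C$ and $\alpha A$ terms in the $\dot I$ equation, which is what eventually yields a sign-definite derivative.

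Next I would differentiate $V$ along trajectories, insert the four right-hand sides of \eqref{eq:model:2}, and use the steady-state identities $\Lambda = \beta I^* S^* + \mu S^*$, $\xi_3 I^* = \beta I^* S^* + \alpha A^* + \omega C^*$, $\xi_2 C^* = \phi I^*$ and $\xi_1 A^* = \rho I^*$ (with $\xi_3 = \rho + \phi + \mu$) to eliminate the constants $\Lambda$ and $\xi_3$. After collecting terms, $\dot V$ should collapse to a combination of blocks of the form $m - \sum_{i=1}^m y_i$ with each $y_i > 0$ and $\prod_i y_i = 1$: a block $2 - S^*/S - S/S^*$ from the $S$ equation, together with blocks pairing $\frac{C}{C^*}\frac{I^*}{I}$ with $\frac{C^*}{C}\frac{I}{I^*}$ and $\frac{A}{A^*}\frac{I^*}{I}$ with $\frac{A^*}{A}\frac{I}{I^*}$. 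By the arithmetic mean--geometric mean inequality each such block is $\le 0$, hence $\dot V \le 0$ on $\Omega \backslash \Omega_0$, with equality exactly when $S = S^*$ and $I/I^* = C/C^* = A/A^*$.

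Finally I would apply LaSalle's invariance principle: every $\omega$-limit set is contained in the largest invariant subset $\mathcal{M}$ of $\{\dot V = 0\} = \{\, S = S^*,\ I/I^* = C/C^* = A/A^* \,\}$. On $\mathcal{M}$ one has $S \equiv S^*$, so $\dot S \equiv 0$; the first equation of \eqref{eq:model:2} then forces $\beta I S^* + \mu S^* = \Lambda$, i.e. $I = I^*$, and the equal-ratio condition gives $C = C^*$ and $A = A^*$. Thus $\mathcal{M} = \{\Sigma^*\}$, so $\Sigma^*$ is globally asymptotically stable on $\Omega \backslash \Omega_0$. The only genuinely delicate point is the algebraic reduction of $\dot V$: choosing the weights correctly and grinding the derivative into the AM--GM-ready form is where the real work and the only real risk of error lie, whereas the positivity of $V$ and the LaSalle step are routine.
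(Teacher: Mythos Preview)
Your proposal is correct and follows essentially the same route as the paper: the identical Goh--Volterra Lyapunov function with weights $\omega/\xi_2$ and $\alpha/\xi_1$, the same steady-state substitutions, the AM--GM reduction (your pairing of the $C$ and $A$ blocks is exactly the observation $\omega C^* = \frac{\omega\phi}{\xi_2}I^*$ and $\alpha A^* = \frac{\alpha\rho}{\xi_1}I^*$ that makes the paper's final expression nonpositive), and the LaSalle argument to pin down $\mathcal{M} = \{\Sigma^*\}$.
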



\section{Sensitivity of the basic reproduction number}
\label{sec:sensitivity}

The sensitivity of the basic reproduction number $R_0$
is an important issue because it determines the model robustness 
to parameter values. Let us examine the sensitivity of $R_0$ 
with respect to some parameters. We start by examining 
the sensitivity of $R_0$ with respect to $\beta$.

\begin{proposition}
\label{prop:role:beta:red:r0}
The basic reproduction number $R_0$ increases with $\beta$.
\end{proposition}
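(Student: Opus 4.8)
The plan is to differentiate the closed-form expression \eqref{eq:R0} for $R_0$ directly with respect to $\beta$. First I would observe that the auxiliary quantities $\xi_1 = \alpha + \mu + d$ and $\xi_2 = \omega + \mu$, together with all the remaining parameters $\mu, \rho, \phi, d, \omega$ occurring in \eqref{eq:R0}, are independent of $\beta$. Hence, writing $\mathcal{N} = \mu\left[\xi_2(\rho + \xi_1) + \xi_1 \phi + \rho d\right] + \rho \omega d$ for the denominator in \eqref{eq:R0}, we have $R_0 = \beta\,\xi_1 \xi_2 / \mathcal{N}$ with $\mathcal{N}$ constant in $\beta$; that is, $R_0$ is an affine (in fact linear) function of $\beta$.

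Next I would compute $\dfrac{\partial R_0}{\partial \beta} = \dfrac{\xi_1 \xi_2}{\mathcal{N}}$ and show that this is strictly positive. Since all model parameters $\Lambda, \mu, \beta, \rho, \phi, \alpha, \omega, d$ are positive, we get $\xi_1 > 0$ and $\xi_2 > 0$, so the numerator $\xi_1 \xi_2 > 0$; likewise each summand of $\mathcal{N}$ is a product of positive quantities, so $\mathcal{N} > 0$. Therefore $\partial R_0 / \partial \beta > 0$, and since $R_0$ is differentiable in $\beta$ on $(0,\infty)$ with everywhere-positive derivative, it is (strictly) increasing in $\beta$, which is the assertion of Proposition~\ref{prop:role:beta:red:r0}.

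The only point requiring any care — the ``obstacle'', such as it is — is verifying that the denominator $\mathcal{N}$ does not vanish and is in fact positive, so that the quotient $R_0$ is well defined and the sign of its derivative is unambiguous; this is immediate from the standing positivity hypotheses on the parameters. Alternatively, one could present the same conclusion through the normalized forward sensitivity index $\Upsilon^{R_0}_{\beta} = \dfrac{\partial R_0}{\partial \beta}\cdot \dfrac{\beta}{R_0}$, which here equals $1 > 0$; this phrasing both establishes the monotonicity and quantifies it, a form that dovetails naturally with the subsequent sensitivity analysis in Section~\ref{sec:sensitivity}.
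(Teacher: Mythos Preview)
Your argument is correct and is exactly the approach the paper takes: the paper's proof simply states that $\partial R_0/\partial\beta>0$ for all parameter values, and you have supplied the explicit computation and sign check that justifies this. The additional remark about the normalized sensitivity index $\Upsilon^{R_0}_{\beta}=1$ also appears in the paper, a few lines later, as a separate proposition.
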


\begin{proof}
The results follows immediately from the fact that
$$
\frac{\partial R_0}{\partial \beta} > 0
$$
for any value of the parameters.
\end{proof}

The partial derivatives of $R_0$ with respect 
to the treatment rate for HIV-infected individuals 
with no symptoms of AIDS and HIV-infected individuals 
with AIDS symptoms, $\phi$ and $\alpha$, respectively, 
are given by 
\begin{equation}
\label{eq:difR0:phi:alpha}
\frac{\partial R_0}{\partial \phi} 
= -\frac{\beta \mu \xi_1^2 \xi_2}{\mathcal{D}^2}, 
\qquad \frac{\partial R_0}{\partial \alpha} 
= \frac{\beta \rho \xi_2^2 (\mu + d) }{\mathcal{D}^2}.
\end{equation}
It follows from \eqref{eq:difR0:phi:alpha} that 
$\frac{\partial R_0}{\partial \phi} < 0$ and 
$\frac{\partial R_0}{\partial \alpha} > 0$ 
for any value of the parameters. Therefore, 
the treatment of HIV-infected individuals with 
no symptoms of AIDS, $I$, has always a positive 
impact in reducing the HIV burden. On the other hand, 
the treatment of HIV-infected individuals increases 
the number of HIV-infected individuals, explained
because when they start ART treatment their 
expectancy of life increases and there will be more 
people living with HIV-infection with no AIDS symptoms. 
This is summarized in the following result.

\begin{proposition}
The basic reproduction number $R_0$ decreases 
with $\phi$ and increases with $\alpha$. 
\end{proposition}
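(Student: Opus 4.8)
The statement to prove is that $R_0$ decreases with $\phi$ and increases with $\alpha$. This is essentially a restatement/corollary of the explicit sign computations already displayed in equation \eqref{eq:difR0:phi:alpha}, so the proof is short.

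\begin{proof}
The plan is to read off the result directly from the sign of the partial derivatives in \eqref{eq:difR0:phi:alpha}. First I would recall that, by \eqref{eq:R0}, $R_0 = \mathcal{D}/\mathcal{N}$ is a differentiable function of the parameters on the biologically meaningful domain where all parameters $\beta, \mu, \rho, \phi, \alpha, \omega, d$ are positive (so that in particular $\mathcal{D} > 0$ and $\mathcal{N} > 0$), and that $\xi_1 = \alpha + \mu + d > 0$ and $\xi_2 = \omega + \mu > 0$. Then I would invoke the already-computed formula
\begin{equation*}
\frac{\partial R_0}{\partial \phi} = -\frac{\beta \mu \xi_1^2 \xi_2}{\mathcal{D}^2},
\qquad \frac{\partial R_0}{\partial \alpha} = \frac{\beta \rho \xi_2^2 (\mu + d)}{\mathcal{D}^2}.
\end{equation*}
Every factor appearing in these two expressions—$\beta$, $\mu$, $\xi_1^2$, $\xi_2$, $\rho$, $\xi_2^2$, $\mu + d$, and $\mathcal{D}^2$—is strictly positive for any admissible choice of parameters. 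Hence $\partial R_0/\partial \phi < 0$ and $\partial R_0/\partial \alpha > 0$ throughout the parameter domain, which is exactly the assertion that $R_0$ is strictly decreasing in $\phi$ and strictly increasing in $\alpha$.

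The only genuine content, if one does not take \eqref{eq:difR0:phi:alpha} for granted, is verifying those two derivative formulas; but since \eqref{eq:R0} gives $R_0$ as an explicit rational function of the parameters, this is a routine quotient-rule differentiation followed by algebraic simplification, and there is no real obstacle. I would therefore treat \eqref{eq:difR0:phi:alpha} as established (as the excerpt does) and present the argument in the form above: differentiability plus the manifest positivity of each factor in the numerator and denominator immediately yields the claimed monotonicity.
\end{proof}
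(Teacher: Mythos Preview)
Your proposal is correct and follows exactly the paper's own approach: the proposition is stated as a summary of the sign observations made immediately before it, namely that every factor in the expressions of \eqref{eq:difR0:phi:alpha} is strictly positive, whence $\partial R_0/\partial \phi<0$ and $\partial R_0/\partial \alpha>0$. There is nothing to add.
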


The partial derivatives of $R_0$ with respect 
to the default treatment rates $\rho$ and $\omega$ 
are given, respectively, by
\begin{equation*}
\frac{\partial R_0}{\partial \rho} 
= -\frac{\beta \xi_1 \xi_2^2 (\mu + d) }{\mathcal{D}^2}, 
\qquad \frac{\partial R_0}{\partial \omega} 
= \frac{\beta \phi \mu \xi_1^2 }{\mathcal{D}^2}.
\end{equation*}
Thus, the basic reproduction number increases with the default 
treatment rate of individuals in the chronic class $\omega$ 
and decreases with the default treatment rate of individuals 
in the class $I$. This makes sense because the lack of ART 
treatment implies a progression to AIDS disease. 
This is stated in the following result.

\begin{proposition}
The basic reproduction number $R_0$ decreases 
with $\rho$ and increases with $\omega$. 
\end{proposition}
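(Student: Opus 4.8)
The plan is to argue exactly as in the proof of Proposition~\ref{prop:role:beta:red:r0} and in the discussion leading to the preceding proposition on $\phi$ and $\alpha$: each of the two monotonicity assertions is equivalent to a sign condition on a partial derivative of $R_0$, and the two derivatives needed are precisely those displayed in the paragraph immediately before the statement, namely
\[
\frac{\partial R_0}{\partial \rho}
= -\frac{\beta \xi_1 \xi_2^2 (\mu + d)}{\mathcal{D}^2},
\qquad
\frac{\partial R_0}{\partial \omega}
= \frac{\beta \phi \mu \xi_1^2}{\mathcal{D}^2},
\]
where $\mathcal{D}$ denotes the denominator of $R_0$ in \eqref{eq:R0} and $\xi_1 = \alpha + \mu + d$, $\xi_2 = \omega + \mu$. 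Since every model parameter is positive, we have $\xi_1 > 0$, $\xi_2 > 0$ and $\mathcal{D}^2 > 0$, so the first expression is strictly negative and the second strictly positive for every admissible choice of parameters. Hence $\partial R_0/\partial \rho < 0$ and $\partial R_0/\partial \omega > 0$, i.e.\ $R_0$ is strictly decreasing in $\rho$ and strictly increasing in $\omega$, which is the assertion.

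For completeness I would also recall how the two derivative formulas arise, because the one for $\omega$ requires slightly more care than the $\phi$ and $\alpha$ cases treated before. The parameter $\rho$ does not occur in the numerator $\beta \xi_1 \xi_2$ of $R_0$ and enters $\mathcal{D}$ linearly with $\partial \mathcal{D}/\partial \rho = \mu \xi_2 + \mu d + \omega d = (\mu + d)\xi_2$; the quotient rule then gives the stated expression at once, and its sign is manifest. The parameter $\omega$, however, appears both inside the numerator through $\xi_2$ and inside $\mathcal{D}$, so the quotient rule must be applied in full: the numerator $\beta \xi_1 \xi_2$ has $\omega$-derivative $\beta \xi_1$, while $\partial \mathcal{D}/\partial \omega = \mu \rho + \mu \xi_1 + \rho d$, so the numerator of $\partial R_0/\partial \omega$ equals $\beta \xi_1 \bigl[ \mathcal{D} - \xi_2 (\mu \rho + \mu \xi_1 + \rho d) \bigr]$; substituting $\xi_2 = \omega + \mu$ and expanding $\mathcal{D}$, the bracket collapses to $\mu \xi_1 \phi$, which yields the second formula.

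I do not expect any genuine obstacle here. The only step that is not purely mechanical is the cancellation in the $\omega$-computation, where most of $\mathcal{D}$ is absorbed by the subtracted term $\xi_2(\mu\rho + \mu\xi_1 + \rho d)$ and only the $\mu\xi_1\phi$ contribution of $\mathcal{D}$ survives; but this is routine algebra. Once the two partial derivatives are recorded, the proposition is an immediate positivity check, in the same one-line spirit as the earlier sensitivity results.
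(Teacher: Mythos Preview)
Your proposal is correct and follows exactly the paper's approach: the paper simply records the two partial derivatives and reads off their signs, which is precisely what you do. Your added verification of how the formulas arise (especially the cancellation in the $\omega$-computation) goes beyond what the paper provides but is accurate; the only cosmetic wrinkle is that in \eqref{eq:R0} the paper actually labels the \emph{numerator} $\mathcal{D}$ and the denominator $\mathcal{N}$, so your phrase ``$\mathcal{D}$ denotes the denominator'' should strictly be $\mathcal{N}$, though the paper itself is inconsistent on this point in its own derivative formulas.
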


The sensitivity of a variable (in our case of interest, $R_0$) 
with respect to model parameters is sometimes measured 
by the so called \emph{sensitivity index}.

\begin{definition}[cf. \cite{Chitnis,Kong}]
\label{def:sense}
The normalized forward sensitivity index of a variable 
$\upsilon$ that depends differentiably 
on a parameter $p$ is defined by
\begin{equation}
\label{eq:def:sense}
\Upsilon_{p}^{\upsilon} 
:= \frac{\partial \upsilon}{\partial p} \times \frac{p}{|\upsilon|}.
\end{equation}
\end{definition}

\begin{remark}
To the most sensitive parameter $p$ it corresponds 
a normalized forward sensitivity index of one or minus one,
that is, $\Upsilon_{p}^{\upsilon} = \pm 1$. 
If $\Upsilon_{p}^{\upsilon} = + 1$,
then an increase (decrease) of $p$ by $x\%$ increases (decreases) 
$\upsilon$ by $x\%$; if $\Upsilon_{p}^{\upsilon} = - 1$, 
then an increase (decrease) of $p$ by $x \%$ decreases (increases)
$\upsilon$ by $x\%$. 
\end{remark}

From \eqref{eq:R0} and Definition~\ref{def:sense}, 
it is easy to derive the normalized forward 
sensitivity index of $R_0$ with respect to $\beta$.

\begin{proposition}
The normalized forward sensitivity index 
of $R_0$ with respect to $\beta$ is $1$:
$\Upsilon_{\beta}^{R_0} = 1$.
\end{proposition}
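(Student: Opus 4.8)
The plan is to compute $\Upsilon_{\beta}^{R_0}$ directly from its definition \eqref{eq:def:sense}, using the closed-form expression \eqref{eq:R0} for $R_0$. The key structural observation is that $\beta$ appears only in the numerator of \eqref{eq:R0}, and there only linearly: writing $\mathcal{D} = \beta\,\xi_1\xi_2$ and noting that $\mathcal{N} = \mu\left[\xi_2(\rho+\xi_1)+\xi_1\phi+\rho d\right]+\rho\omega d$ is independent of $\beta$, we have $R_0 = \beta\cdot\frac{\xi_1\xi_2}{\mathcal{N}}$, i.e. $R_0$ is homogeneous of degree one in $\beta$.

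From this, the first step is to differentiate: $\dfrac{\partial R_0}{\partial\beta} = \dfrac{\xi_1\xi_2}{\mathcal{N}} = \dfrac{R_0}{\beta}$. The second step is to substitute into \eqref{eq:def:sense} with $\upsilon = R_0$ and $p = \beta$, observing that $R_0 > 0$ for all admissible parameter values (all the constants $\Lambda,\mu,\beta,\rho,\phi,\omega,\alpha,d$ being nonnegative with $\mu>0$), so $|R_0| = R_0$. This gives
\[
\Upsilon_{\beta}^{R_0} = \frac{\partial R_0}{\partial\beta}\cdot\frac{\beta}{|R_0|} = \frac{R_0}{\beta}\cdot\frac{\beta}{R_0} = 1,
\]
which is the claim.

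There is essentially no obstacle here: the statement is a one-line verification once the linearity of $R_0$ in $\beta$ is noted, and in fact it is a general phenomenon — any quantity that is positive and homogeneous of degree one in a parameter automatically has normalized forward sensitivity index $+1$ with respect to that parameter. The only point that warrants a word of care is dropping the absolute value, i.e. confirming positivity of $R_0$ on the biologically feasible parameter range, which is immediate from the signs of the constants in \eqref{eq:R0}.
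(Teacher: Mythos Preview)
Your proof is correct and follows exactly the approach the paper has in mind: the paper's own proof simply states that the result is a direct consequence of \eqref{eq:R0} and \eqref{eq:def:sense}, and your argument is precisely the explicit unpacking of that one-line justification.
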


\begin{proof}
It is a direct consequence of \eqref{eq:R0} and \eqref{eq:def:sense}. 
\end{proof}

The sensitivity index of $R_0$ with respect to $\phi$, 
$\rho$, $\alpha$ and $\omega$ is given, respectively, by
$$
\Upsilon_{\phi}^{R_0} = -\phi\frac{\mu C_3}{\mathcal{D}}, 
\quad \Upsilon_{\rho}^{R_0} = -\rho\frac{C_2 (\mu + d)}{\mathcal{D}}, 
\quad \Upsilon_{\alpha}^{R_0} = \alpha \frac{\rho C_2 (\mu + d)}{C_3 \mathcal{D}}, 
\quad \Upsilon_{\omega}^{R_0} = \omega \frac{\phi \mu C_3}{C_2 \mathcal{D}}. 
$$
In Section~\ref{sec:model:CapeVerde} we compute the previous 
sensitivity indexes for the Cape Verde HIV/AIDS data. 


\section{Model application to Cape Verde HIV/AIDS data}
\label{sec:model:CapeVerde}

In this section, we calibrate our model \eqref{eq:model} 
to the cumulative cases of infection by HIV and AIDS 
from 1987 to 2014 and we show that it predicts well this reality. 
In Table~\ref{table:realdataCapeVerde}, the cumulative cases 
of infection by HIV and AIDS in Cape Verde are depicted 
for the years 1987--2014 \cite{report:HIV:AIDS:capevert2015}. 
\begin{table}[!htb]
\centering
\begin{tabular}{l  l  l  l l l l l l l l} \hline \hline
{\small{Year}} &  {\small{1987}}   &  {\small{1988}} & {\small{1989}} & {\small{1990}} & {\small{1991}} 
& {\small{1992}} & {\small{1993}} & {\small{1994}} & {\small{1995}} & {\small{1996}}\\ \hline
{\small{HIV/AIDS}} & {\small{61}} & {\small{107}}  &  {\small{160}} &  {\small{211}} &  {\small{244}} 
& {\small{303}} &  {\small{337}} &  {\small{358}} & {\small{395}}  & {\small{432}}\\ \hline \hline
{\small{Year}} & {\small{1997}} &  {\small{1998}}  &  {\small{1999}} & {\small{2000}} & {\small{2001}} 
& {\small{2002}} & {\small{2003}} & {\small{2004}} & {\small{2005}}  & {\small{2006}}\\
{\small{HIV/AIDS}} & {\small{471}} & {\small{560}} & {\small{660}}  &  {\small{779}} &  {\small{913}} 
& {\small{1064}} &  {\small{1233}} &  {\small{1493}}  &  {\small{1716}}  & {\small{2015}}\\ \hline \hline
{\small{Year}} & {\small{2007}} & {\small{2008}} &  {\small{2009}} & {\small{2010}} & {\small{2011}} 
& {\small{2012}} & {\small{2013}} & {\small{2014}} & &\\
{\small{HIV/AIDS}}  & {\small{2334}} & {\small{2610}} & {\small{2929}} & {\small{3340}}  
&  {\small{3739}} &  {\small{4090}} & {\small{4537}} & {\small{4946}} & &\\ \hline \hline
\end{tabular}
\caption{Cumulative cases of infection by HIV/AIDS in Cape Verde 
in the period 1987--2014 \cite{report:HIV:AIDS:capevert2015}.}
\label{table:realdataCapeVerde}
\end{table}
Taking into account the data from Table~\ref{table:realdataCapeVerde}, 
we estimate empirically the value of the HIV transmission rate to be 
$\beta = 0.857$. Searching for the optimal value of $\beta$ 
in the interval $[0.8, 0.9]$ with respect to the $l_2$ norm,
we find that the optimal value of $\beta$ is $\beta =  0.866$, 
with an error equal to 0.507\% individuals per year with respect 
to the initial total population. In all our simulations we choose
$\beta =  0.866$. Moreover, we consider the initial conditions 
\eqref{eq:initcond:CV} based on 
\cite{report:HIV:AIDS:capevert2015,url:worlbank:capevert}: 
\begin{equation}
\label{eq:initcond:CV}
S_0 = S(0) = 338923 \, , 
\quad I_0 = I(0) = 61\, , 
\quad C_0 = C(0) = 0\, , 
\quad A_0 = A(0) = 0.
\end{equation}
The values of the parameters $\rho = 0.1$ and $\alpha = 0.33$ are taken from 
\cite{Sharomi:MathBio:2008} and \cite{Bhunu:BMB:2009:HIV:TB}, respectively. 
We assume that after one year, the HIV infected individuals $I$ that are under
ART treatment have a low viral load \cite{Perelson} and, therefore, 
are transferred to the class $C$. In agreement, we take $\phi=1$. 
It is well known that taking ART therapy is a long-term commitment.
In our simulations we assume that the default treatment rate for $C$ 
individuals is approximately 11 years ($1/\omega$ years to be precise).
Following the World Bank data \cite{url:worlbank:capevert},
the recruitment and the natural death rates are assumed to take the values 
$\Lambda = 10724$ and $\mu = 1/69.54$. Finally, the AIDS induced death 
rate is assume to be $d = 1$ based on \cite{ZwahlenEggerUNAIDS}. 
All the considered parameter values are resumed 
in Table~\ref{table:parameters:HIV:CV}. 
\begin{table}[!htb]
\centering
\begin{tabular}{l  p{6.5cm} l l}
\hline \hline
{\small{Symbol}} &  {\small{Description}} & {\small{Value}} & {\small{References}}\\
\hline
{\small{$N(0)$}} & {\small{Initial population}} & {\small{$338 984$}}  
& {\small{\cite{url:worlbank:capevert}}}\\
{\small{$\Lambda$}} & {\small{Recruitment rate}} & {\small{$10724$}}  
& {\small{\cite{url:worlbank:capevert} }}\\
{\small{$\mu$}} & {\small{Natural death rate}} & {\small{$1/69.54$}} 
& {\small{\cite{url:worlbank:capevert} }}\\
{\small{$\beta$}} & {\small{HIV transmission rate}} & {\small{$0.866$}} & {\small{Estimated}}\\
{\small{$\phi$}} & {\small{HIV treatment rate for $I$ individuals}} &  {\small{$1$}} 
& {\small{\cite{Perelson}}} \\
{\small{$\rho$}} & {\small{Default treatment rate for $I$ individuals}}
& {\small{$0.1 $}} & {\small{\cite{Sharomi:MathBio:2008}}}\\
{\small{$\alpha$}} & {\small{AIDS treatment rate}}
& {\small{$0.33 $}} & {\small{\cite{Bhunu:BMB:2009:HIV:TB}}}\\
{\small{$\omega$}} & {\small{Default treatment rate for $C$ individuals}}
& {\small{$0.09$}} & {\small{Assumed}}\\
{\small{$d$}} & {\small{AIDS induced death rate}} & {\small{$1$}} 
& {\small{\cite{ZwahlenEggerUNAIDS}}}\\
\hline \hline
\end{tabular}
\caption{Parameters of the HIV/AIDS model \eqref{eq:model} for Cape Verde.}
\label{table:parameters:HIV:CV}
\end{table}

In Figure~\ref{fig:model:fit}, we observe that model \eqref{eq:model} 
fits the real data reported in Table~\ref{table:realdataCapeVerde}. 
The cumulative cases described by model \eqref{eq:model} are given by 
$I(t) + C(t) + A(t) + \mu\left( I(t) + C(t)\right) + (\mu + d) \left( A(t)\right)$ 
for $t \in [0, 27]$, which corresponds to the interval of time between 
the years of 1987 ($t=0$) and 2014 ($t=27$). 
\begin{figure}[!htb]
\centering
\includegraphics[width=0.7\textwidth]{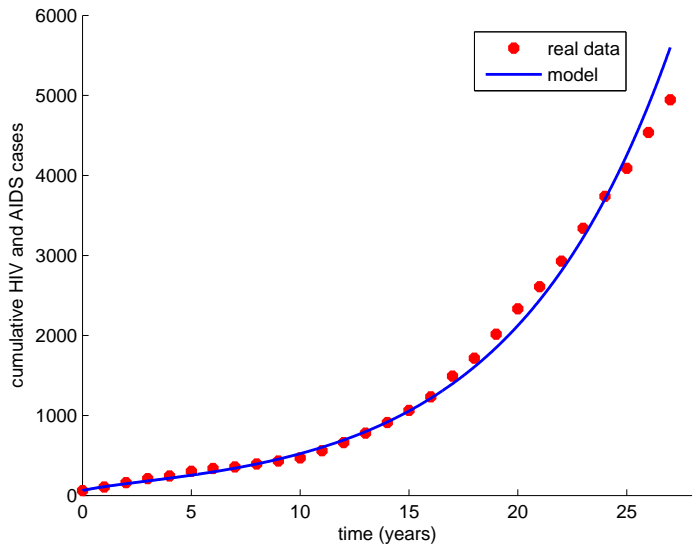}
\caption{Model \eqref{eq:model} fitting the data 
of cumulative cases of HIV and AIDS infection  
in Cape Verde between 1987 and 2014 \cite{report:HIV:AIDS:capevert2015}.}
\label{fig:model:fit}
\end{figure}
The parameter values from Table~\ref{table:parameters:HIV:CV} correspond 
to a basic reproduction number $R_0 = 3.8049$. 
Figure~\ref{fig:stab:EE} illustrates, numerically,
the global stability of the endemic equilibrium. For this reason,
we consider different initial conditions, in different regions of the plane,
sufficiently far a way from the endemic equilibrium
$(S^*, I^*, C^*, A^*) = (146000, 37894, 363040, 2818.7)$.  
\begin{figure}[!htb]
\centering
\subfloat[\footnotesize{$(S, I)$}]{\label{EEstabSI}
\includegraphics[width=0.45\textwidth]{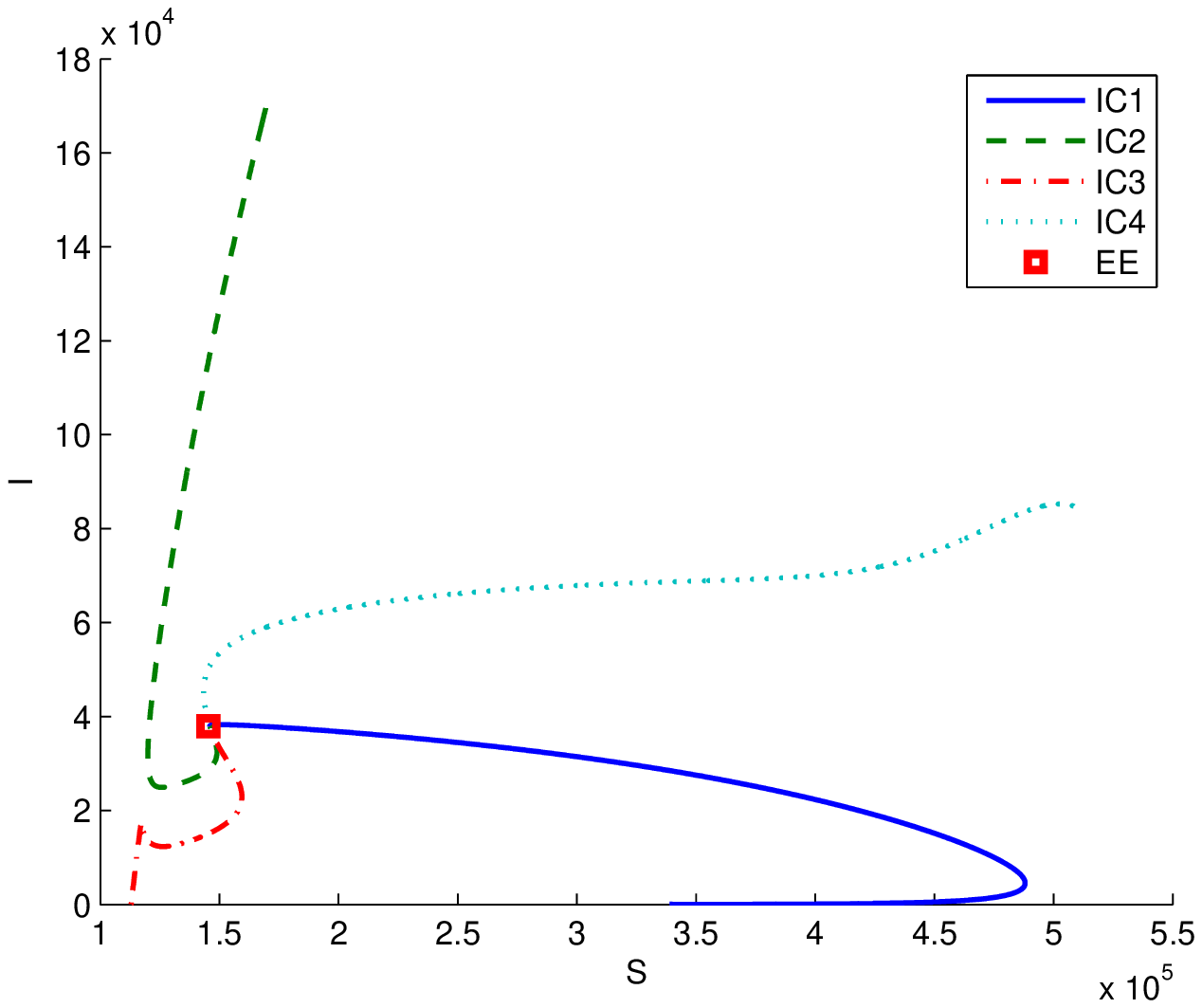}}
\subfloat[\footnotesize{$(C, A)$}]{\label{EEstabCA}
\includegraphics[width=0.45\textwidth]{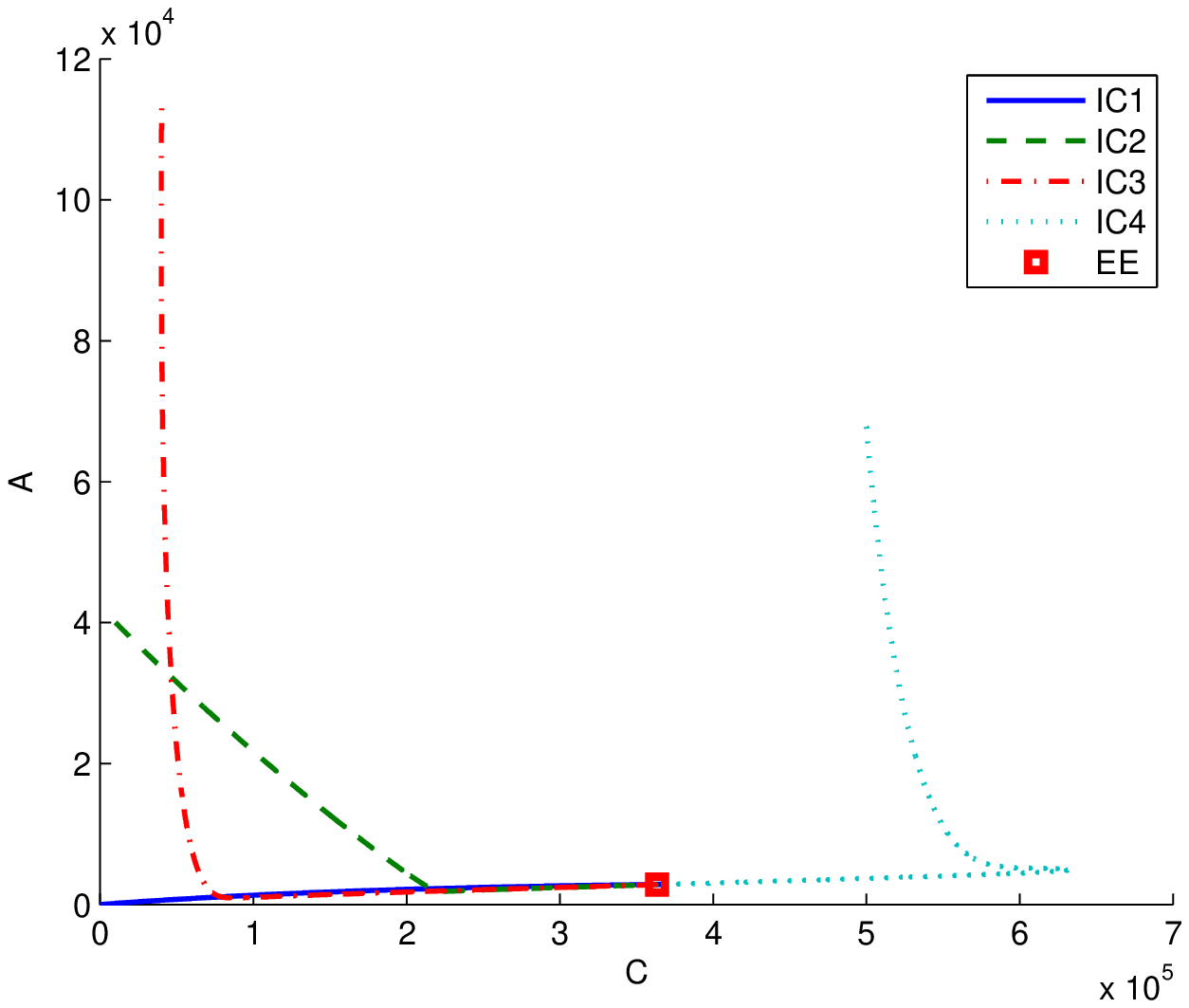}}
\caption{(a) Susceptible individuals \emph{versus} HIV-infected 
individuals with no clinical symptoms of AIDS. 
(b) HIV-infected individuals under treatment for HIV infection 
\emph{versus} HIV-infected individuals with AIDS clinical symptoms. 
Both figures consider initial conditions $IC1 = (S_0, I_0, C_0, A_0)$, 
$IC2 = (S_0/2, I_0+S_0/2, C_0+10^4, A_0+4 \times 10^4)$, 
$IC3 = (S_0/3, I_0, C_0+4 \times 10^4, A_0+S_0/3)$, 
and $IC4 = (3 S_0/2, I_0+S_0/4, C_0+5 \times 10^5, A_0+S_0/5)$.
EE denotes the endemic equilibrium 
$(S^*, I^*, C^*, A^*) = (146000, 37894, 363040, 2818.7)$.}
\label{fig:stab:EE}
\end{figure}
In Table~\ref{table:sensitivity:index}, we present the sensitivity index 
of parameters $\beta$, $\phi$, $\rho$, $\alpha$ and $\omega$ computed 
for the parameter values given in Table~\ref{table:parameters:HIV:CV}. 
\begin{table}[!htb]
\centering
\begin{tabular}{l l }
\hline \hline
{\small{Parameter}} & {\small{Sensitivity index}}\\
\hline
{\small{$\beta$}} & {\small{+1}} \\
{\small{$\phi$}} & {\small{$-0.6053$}}  \\
{\small{$\rho$}} & {\small{$-0.3315$}}\\
{\small{$\alpha$}} &  {\small{$+0.0814$}} \\
{\small{$\omega$}} &  {\small{$+0.5219$}} \\
\hline \hline
\end{tabular}
\caption{Sensitivity index of $R_0$ for parameter values 
given in Table~\ref{table:parameters:HIV:CV}.}
\label{table:sensitivity:index}
\end{table}


\section{Conclusion}
\label{sec:conclusion}

In this paper, we proposed a nonlinear mathematical model for HIV/AIDS 
transmission with varying population size in a homogeneously mixing population. 
The model was analysed for the particular situation where only HIV-infected 
individuals with no AIDS symptoms, which are not under ART treatment, 
transmit HIV virus. However, the global stability result can be easily 
extended to the case where individuals under ART treatment 
and individuals with AIDS disease also transmit HIV virus. 
We have shown that the proposed model describes very well 
the reality given by the data of HIV/AIDS infection 
in Cape Verde from 1987 to 2014. We conclude that, 
with the parameter values of Table~\ref{table:parameters:HIV:CV}, 
the endemic equilibrium for Cape Verde 
is $(S^*, I^*, C^*, A^*)=(146000, 37894, 363040, 2818.7)$, 
which is not an encouraging prediction. Indeed, the number of individuals 
with HIV-infection and AIDS disease is relatively big and increasing 
and does not converge to the UNAIDS worldwide goal of ending the AIDS epidemic by 2030 
\cite{AIDSnumbers2015}. From the sensitivity index of the basic reproduction number 
with respect to treatment rates for HIV-infected individuals with no AIDS symptoms, 
we may conclude that it is important to invest in providing conditions 
to HIV-infected individuals to maintain, correctly, the ART treatment. Analogously, 
HIV-infected individuals with AIDS disease should be supported in order 
to start ART treatment as soon as possible. All the measures that help 
to reduce the default treatment should be supported. Finally, it is essential
to reduce HIV transmission. We observed that the transmission rate 
is the parameter that most affects the basic reproduction number. 
 

\section*{Acknowledgments}

This research was partially supported by the
Portuguese Foundation for Science and Technology (FCT)
within projects UID/MAT/04106/2013 (CIDMA) and 
PTDC/EEI-AUT/2933/2014 (TOCCATA), co-funded by Project 3599 
-- Promover a Produ\c{c}\~ao Cient\'{\i}fica e Desenvolvimento
Tecnol\'ogico e a Constitui\c{c}\~ao de Redes Tem\'aticas (3599-PPCDT)
and FEDER funds through COMPETE 2020, Programa Operacional
Competitividade e Internacionaliza\c{c}\~ao (POCI).
Silva is also grateful to the FCT post-doc 
fellowship SFRH/BPD/72061/2010.
The authors are grateful to two anonymous Reviewers
for several comments and suggestions.



\end{document}